\documentclass[11pt,english,letterpaper,article,superscriptaddress,nofootinbib]{revtex4}
\usepackage[T1]{fontenc}
\usepackage[latin9]{inputenc}
\usepackage{prettyref}
\usepackage{float}
\usepackage{framed}
\usepackage{amsthm}
\usepackage{amsmath}
\usepackage{graphicx}

\makeatletter
\@ifundefined{textcolor}{}
{%
 \definecolor{BLACK}{gray}{0}
 \definecolor{WHITE}{gray}{1}
 \definecolor{RED}{rgb}{1,0,0}
 \definecolor{GREEN}{rgb}{0,1,0}
 \definecolor{BLUE}{rgb}{0,0,1}
 \definecolor{CYAN}{cmyk}{1,0,0,0}
 \definecolor{MAGENTA}{cmyk}{0,1,0,0}
 \definecolor{YELLOW}{cmyk}{0,0,1,0}
 }
  \theoremstyle{plain}
  \newtheorem{thm}{Theorem}
  \theoremstyle{plain}
  \newtheorem{lem}{Lemma}
 \ifx\proof\undefined\
   \newenvironment{proof}[1][\proofname]{\par
     \normalfont\topsep6\p@\@plus6\p@\relax
     \trivlist
     \itemindent\parindent
     \item[\hskip\labelsep
           \scshape
       #1]\ignorespaces
   }{%
     \endtrivlist\@endpefalse
   }
   \providecommand{\proofname}{Proof}
 \fi


\usepackage{amsthm}
\usepackage{prettyref}
\newrefformat{thm}{theorem~\ref{#1}}
\newrefformat{lem}{~\eqref{#1}}
\newrefformat{eq}{~\eqref{#1}}

\newrefformat{lem}{~\eqref{#1}}

\makeatother

\usepackage{babel}

\makeatother

\usepackage{babel}
\newtheorem*{adiabaticthm}{Adiabatic Theorem (from \cite{2007JMP....48j2111J})}
\newtheorem*{scramblethm}{Scrambled Theorem (modified from \cite{2005quant.ph.12159F})}

\makeatother

\usepackage{babel}

\begin{document}
\title{Unstructured Randomness, Small Gaps and Localization}
\author{Edward Farhi}
\affiliation{Center for Theoretical Physics, Massachusetts Institute of Technology, Cambridge, MA 02139}
\author{Jeffrey Goldstone}
\affiliation{Center for Theoretical Physics, Massachusetts Institute of Technology, Cambridge, MA 02139}
\author{David Gosset}
\affiliation{Center for Theoretical Physics, Massachusetts Institute of Technology, Cambridge, MA 02139}
\author{Sam Gutmann}
\noaffiliation
\author{Peter Shor}
\affiliation{Center for Theoretical Physics, Massachusetts Institute of Technology, Cambridge, MA 02139}
\affiliation{Department of Mathematics, Massachusetts Institute of Technology, Cambridge, MA 02139}
\begin{abstract} We study the Hamiltonian associated with the quantum adiabatic algorithm with a random cost function. Because the cost function lacks structure we can prove results about the ground state. We find the ground state energy as the number of bits goes to infinity, show that the minimum gap goes to zero exponentially quickly, and we see a localization transition. We prove that there are no levels approaching the ground state near the end of the evolution. We do not know which features of this model are shared by a quantum adiabatic algorithm applied to random instances of satisfiability since despite being random they do have bit structure.
\end{abstract}
\maketitle

\section{Introduction, Discussion, and Conclusions}

Recently there has been interest in the relevance of Anderson localization
to the quantum adiabatic algorithm\cite{2009arXiv0908.2782A,2009arXiv0912.0746A,2010arXiv1005.3011K}.
In this paper we study properties of the Hamiltonian associated with
the adiabatic algorithm with a cost function that has random (essentially)
uncorrelated values. The cost function we look at does not have the
structure present in cost functions produced by random instances of
satisfiability. This lack of structure makes our example analyzable
and the localization transition and corresponding small gap will be
evident. The model is a spin Hamiltonian on $n$ spins that takes
the form \begin{equation}
H(s)=\left(1-s\right)\sum_{i=1}^{n}\left(\frac{1-\sigma_{x}^{i}}{2}\right)+s\sum_{z=0}^{2^{n}-1}E(z)|z\rangle\langle z|.\label{eq:h}\end{equation}
The {}``on site energies'' $E(z)$ are random variables obtained
by scrambling the Hamming weight cost function. Viewing $z$ in\prettyref{eq:h}
as an n bit string, the Hamming weight $W(z)$ is the number of ones
in the string. Let $\pi$ be a random permutation of the $2^{n}$
integers between $0$ and $2^{n}-1$. By ``random permutation'' we
mean that all $2^{n}!$ permutations are equally likely. Note we are
permuting the $2^{n}$ strings, not the $n$ bits. Then \begin{equation}
E(z)=W(\pi^{-1}(z))\label{eq:E(z)}\end{equation}
 and\begin{equation}
H_{\pi}(s)=\left(1-s\right)\sum_{i=1}^{n}\left(\frac{1-\sigma_{x}^{i}}{2}\right)+s\sum_{z=0}^{2^{n}-1}W(\pi^{-1}(z))|z\rangle\langle z|.\label{eq:hpi}\end{equation}

In figure \ref{fig:picture} we plot the lowest 25 energy levels of
this Hamiltonian, as a function of $s$, with $n=18$ and with a particular
random choice of the permutation $\pi$. Different random permutations
produce very similar pictures. The ground state energy is well approximated
by two straight lines\cite{2005quant.ph.12159F,2010PThPS.184..290J,2006JSMTE..11..012O}.
We will prove that asymptotically (in $n)$ they are in fact straight
lines. At $s=\frac{1}{2}$ the gap between the ground state and the
first excited state is small and we will show that asymptotically
the minimum gap is exponentially small in $n$. 

This Hamiltonian is similar to the {}``random energy model'' of
the form\prettyref{eq:h} with on site energies chosen i.i.d, that
is, independently and identically distributed. Some of what we prove
is already known for the random energy model\cite{2006JSMTE..11..012O,2010arXiv1002.4409P,2010PThPS.184..290J}.
For example, in \cite{2010PThPS.184..290J}, Jörg et. al. use perturbation
theory to show the existence of a first order phase transition in
a random energy model. In contrast, we use variational methods which
prove upper and lower bounds on the ground state energy. To prove
that the gap is exponentially small in $n$, we use the information
theoretic result\cite{2005quant.ph.12159F} that no efficient quantum
algorithm exists for locating the minimum of a scrambled cost function.

It is apparent from the picture that the ground state changes dramatically
at $s=\frac{1}{2}$. In fact for $s<\frac{1}{2}$, the ground state
is very close to the ground state at $s=0$,\[
|x=0\rangle=\frac{1}{\sqrt{2^{n}}}\sum_{z=0}^{2^{n}-1}|z\rangle\]
 which in the $z$ basis is completely delocalized. For $s>\frac{1}{2}$
the ground state is close to the ground state at $s=1$ which is \[
|z=\pi(0)\rangle\]
 corresponding to one $z$ string, that is, a fully localized state. 

The small gap just seen at $s=\frac{1}{2}$ is associated with a {}``delocalized
to localized'' first order phase transition. Recently it has been
suggested that a {}``localized to localized'' ground state transition
can also lead to an exponentially small gap. These ground state transitions,
studied in \cite{2009arXiv0908.2782A,2009arXiv0912.0746A,2009arXiv0909.4766F,2010arXiv1005.3011K,2009PhRvA..80f2326A},
can be seen using low order perturbation theory, and occur for $s\rightarrow1$
as $n\rightarrow\infty$. There are two key features of a model that
exhibits these perturbative crosses. First, for any string a single
bit flip changes the cost function by $O(1)$. The other feature is
that there are very disparate bit strings with low cost. The scrambled
model that we study has the second feature but not the first and we
show that perturbative crosses are not present. We prove that for
$s\geq0.9$ the gap is greater than a positive constant independent
of $n$. 

\begin{figure}[t]
\begin{centering}
\includegraphics[scale=0.7]{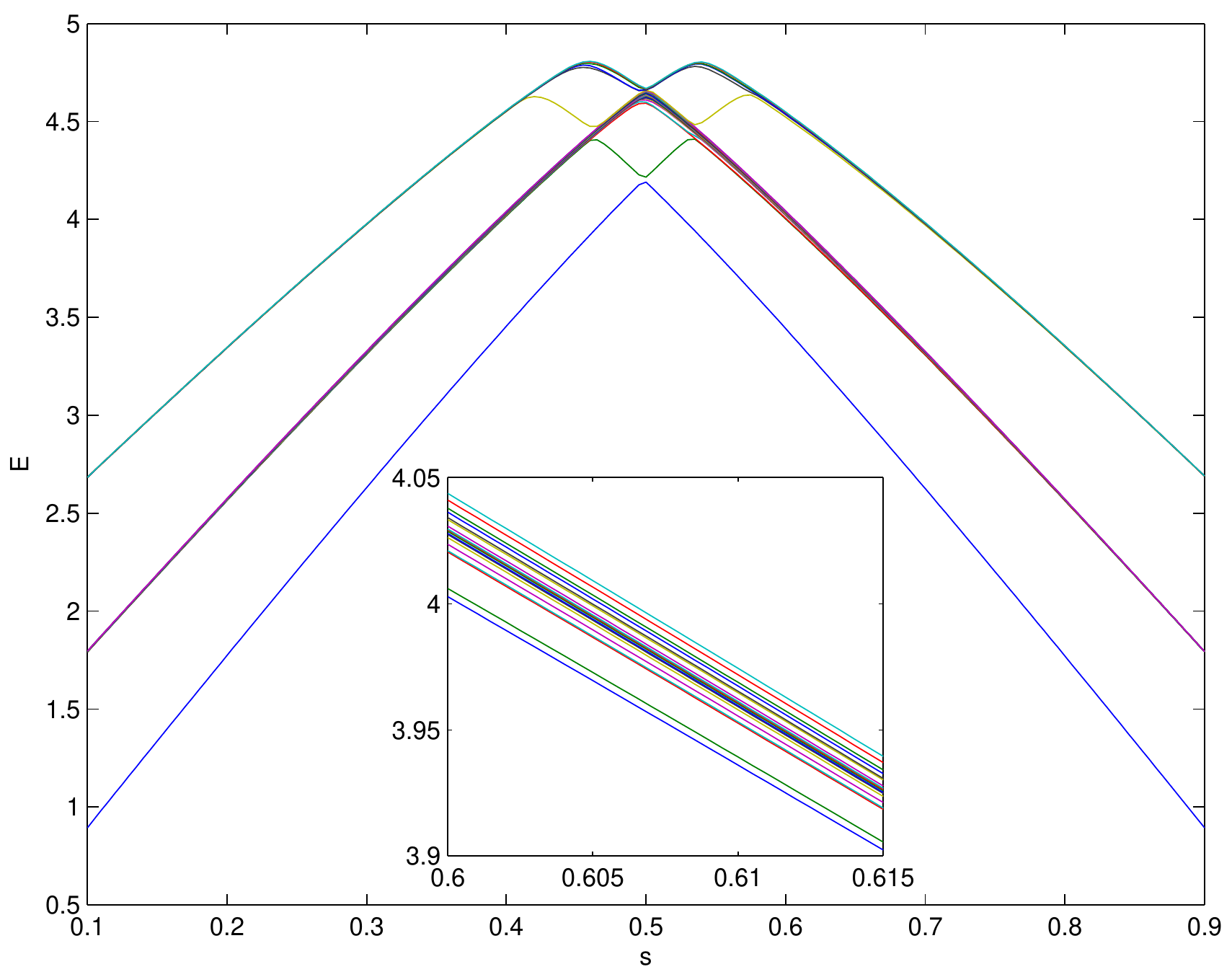}
\par\end{centering}

\caption{Lowest 25 energy levels for an instance with a random permutation
at $n=18$. The inset shows a magnified view of levels 2 through 19
near $s=0.6$.\label{fig:picture} }

\end{figure}

\begin{figure}[H]
\begin{centering}
\includegraphics[scale=0.7]{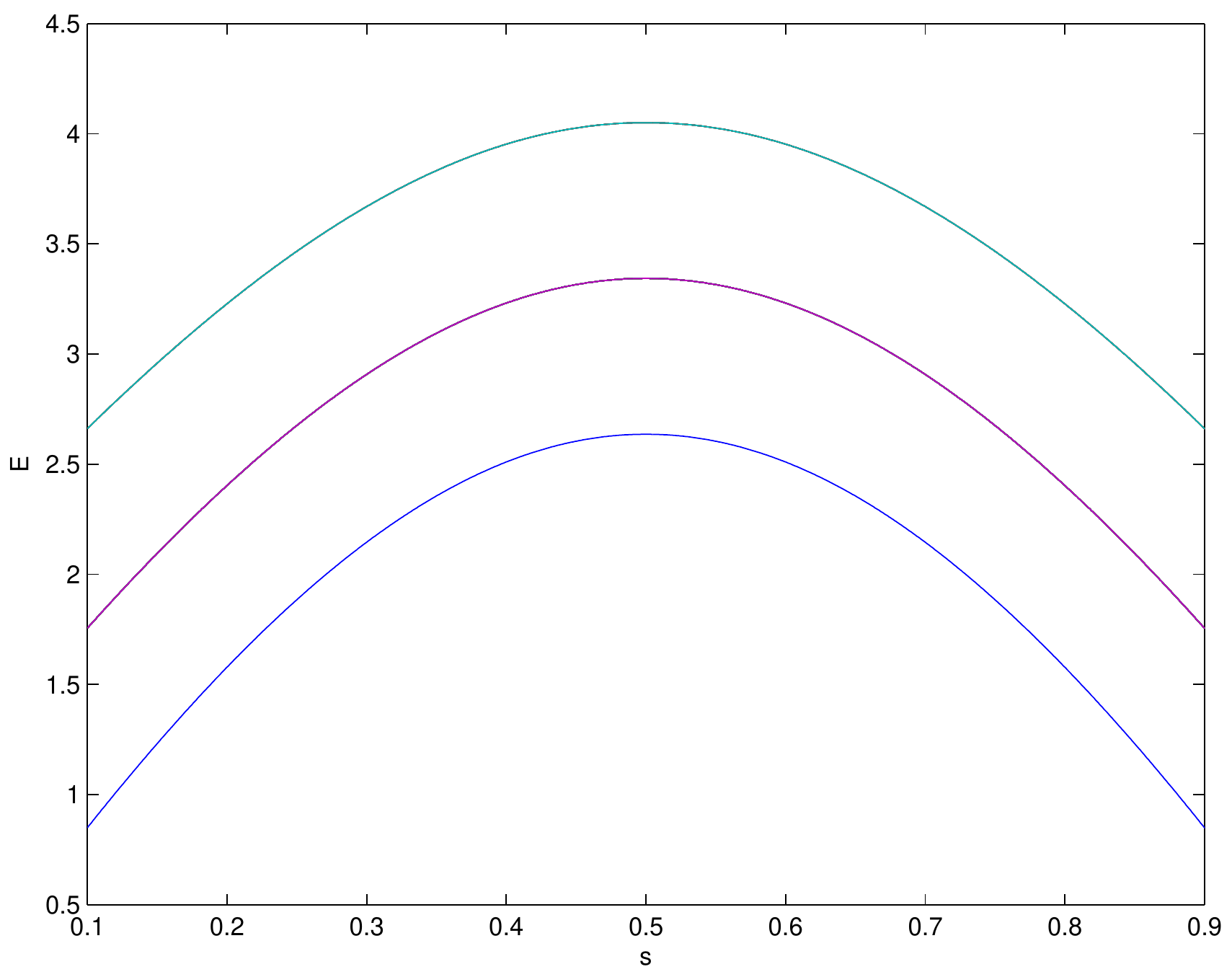}
\par\end{centering}

\caption{Lowest 25 levels for the instance with the identity permutation at
$n=18$.\label{fig:decoupled}}

\end{figure}

The regularity in figure \ref{fig:picture} and the simple description
of the ground state are consequences of the total lack of structure
in the scrambled cost function. Although it is not relevant to the
performance of the adiabatic algorithm, it is interesting to look
at the middle of the spectrum. In figure \ref{fig:Ten-energy-levels}
we show energy levels in the middle of the spectrum for a random choice
of $\pi$ at $n=14$. Here there are many avoided crosses and no apparent
regularity in contrast to the bottom of the spectrum.

Suppose the permutation $\pi$ used to generate figure \ref{fig:picture}
were replaced by the identity. Now the cost function, $E(z)=W(z)$,
is the unscrambled Hamming weight and the Hamiltonian\prettyref{eq:hpi}
is a sum of $n$ one qubit Hamiltonians making it easy to analyze.
The lowest $25$ levels for $n=18$ are shown in figure \ref{fig:decoupled}.
Note that the minimum gap is independent of $n$. Furthermore the
values and degeneracies of the on site energies in figure \ref{fig:decoupled}
are the same as those in figure \ref{fig:picture}. The radical differences
between these figures are caused by the bit structure present in the
unscrambled case but absent in the scrambled case.

\begin{figure}
\includegraphics[scale=0.75]{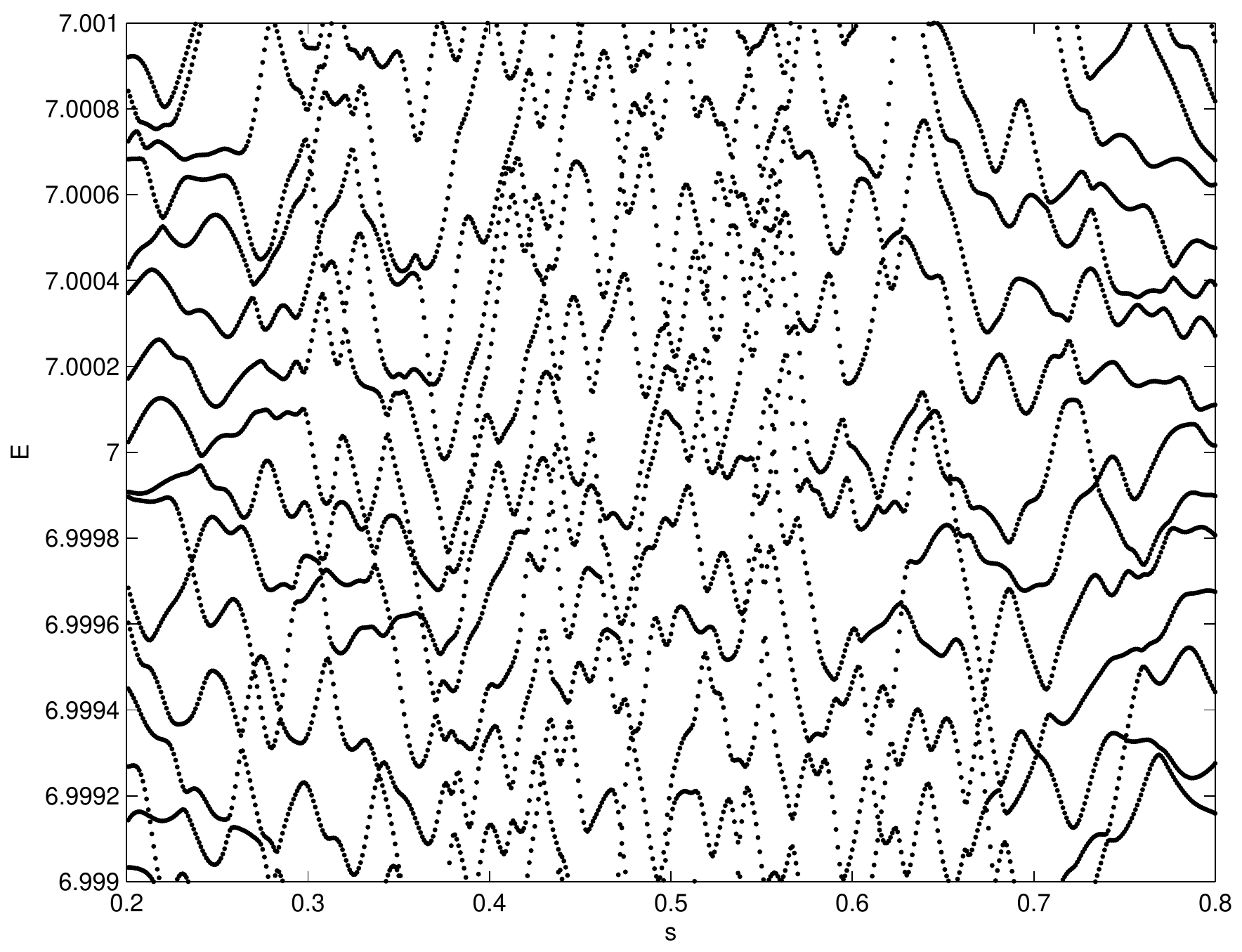}

\caption{Energy levels in the middle of the spectrum for a 14 spin instance
with a random permutation.\label{fig:Ten-energy-levels}}

\end{figure}
It is not yet known how the quantum adiabatic algorithm performs on
random instances of satisfiability, for example, instances of Exact
Cover generated by random choices of 3 bits \cite{2008PhRvL.101q0503Y,2010PhRvL.104b0502Y}.
Despite the randomness, these Hamiltonians still have bit structure.
We do not know if random instances of satisfiability have a {}``delocalized
to localized'' first order phase transition like the fully unstructured
example shown in figure \ref{fig:picture} or have no {}``delocalized
to localized'' phase transition like the bit structured example shown
in figure \ref{fig:decoupled}.

\section{Results}

For a given $n$ and $\pi$ we have $H_{\pi}(s)$ given by\prettyref{eq:hpi}.
Denote the ground state energy by $E_{\pi}(s)$. Also let\begin{equation}
e(s)=\begin{cases}
\frac{s}{2} & \text{ for }0\leq s\leq\frac{1}{2}\\
\frac{\left(1-s\right)}{2} & \text{ for }\frac{1}{2}\leq s\leq1.\end{cases}\label{eq:e(s)}\end{equation}

We have 3 results which suffice for our needs but are not best possible.
First $\frac{E_{\pi}(s)}{n}$ is well approximated by $e(s)$ for
$n$ large. %
\begin{framed}%
\begin{thm}
\label{thm:mainthm}Let $\pi$ be a random permutation. Then \[
\frac{E_{\pi}(s)}{n}\leq e(s)\text{ for all }0\leq s\leq1\]
 and \[
\text{Pr}\left[\frac{E_{\pi}(s)}{n}\geq e(s)\left[1-5n^{-\frac{1}{4}}\right]\text{ for all }0\leq s\leq1\right]\rightarrow1\text{ as }n\rightarrow\infty.\]

\end{thm}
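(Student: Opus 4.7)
\emph{Upper bound.} By the variational principle, $E_\pi(s) \leq \langle\psi|H_\pi(s)|\psi\rangle$ for any trial $|\psi\rangle$. Taking $|\psi\rangle = |x=0\rangle = 2^{-n/2}\sum_z|z\rangle$, the mutual unbiasedness $|\langle x|z\rangle|^2 = 2^{-n}$ gives $\langle B\rangle = 0$ and $\langle V_\pi\rangle = 2^{-n}\sum_z W(\pi^{-1}(z)) = n/2$, hence $\langle H_\pi(s)\rangle = sn/2$. Taking $|\psi\rangle = |z=\pi(0)\rangle$ gives $\langle V_\pi\rangle = 0$ and, by the same unbiasedness, $\langle B\rangle = n/2$, hence $\langle H_\pi(s)\rangle = (1-s)n/2$. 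The minimum of the two trial energies equals $ne(s)$ for every $\pi$ and $s$, establishing the first (deterministic) inequality.

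\emph{Lower bound: concavity reduces to $s=1/2$.} Since $E_\pi(s) = \min_\psi[(1-s)\langle B\rangle + s\langle V_\pi\rangle]$ is an infimum of linear functions of $s$, it is concave; combined with $E_\pi(0) = E_\pi(1) = 0$, concavity yields $E_\pi(s) \geq 2\min(s,1-s)\,E_\pi(1/2) = 4e(s)\,E_\pi(1/2)$ for all $s \in [0,1]$. It therefore suffices to prove $\lambda_{\min}(B+V_\pi) \geq (n/2)(1-5n^{-1/4})$ with probability tending to~$1$.

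\emph{Bounding $\lambda_{\min}(B+V_\pi)$.} The mutual unbiasedness also gives $\langle x|V_\pi|x\rangle = n/2$ for every $x$, so writing $V_\pi = (n/2)I + E_\pi$ the perturbation $E_\pi$ has zero diagonal in the $x$-basis, with off-diagonal entries $\langle x|E_\pi|x'\rangle = 2^{-n}\sum_z (W(\pi^{-1}(z))-n/2)(-1)^{(x+x')\cdot z}$---Fourier coefficients of a random function with mean $0$ and variance $\Theta(n/2^n)$ under the random permutation. Given any state $|\psi\rangle$, set $\beta = \langle B\rangle/n$ and $\gamma = \langle V_\pi\rangle/n$. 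Markov's inequality confines $|\psi\rangle$ to the subspace $B^{\leq k} = \mathrm{span}\{|x\rangle : |x|\leq k\}$ up to error $\leq \beta n/k$; on this subspace a Frobenius estimate gives $\|\Pi_{B\leq k}\,E_\pi\,\Pi_{B\leq k}\|_{\mathrm{op}} \lesssim N_k\sqrt{n/2^n}$ with high probability, which is $o(1)$ provided $N_k \ll 2^{n/2}/\sqrt{n}$. Combining this with the Markov bound and tracking cross-terms with the high-$B$ complement yields $\gamma \geq 1/2 - O(\sqrt{\beta n/k})$ for $\beta$ small. A symmetric argument using $B = (n/2)I + F$, where $F$ vanishes on the $z$-diagonal and is supported only on hypercube-neighbour pairs, handles the case of small $\gamma$. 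Optimizing $k$ against the two resulting error terms delivers $\beta+\gamma \geq 1/2 - 5n^{-1/4}$ with probability tending to~$1$, which is the sought bound.

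\emph{Main obstacle.} The technical heart is the uniform operator-norm control of the random matrix $\Pi E_\pi \Pi$, whose entries arise from a single random permutation and are hence correlated rather than independent, together with covering the intermediate regime in which neither $\beta$ nor $\gamma$ is small enough for Markov localization to be efficient. A plain Frobenius bound on $\Pi E_\pi \Pi$ suffices for the stated $n^{-1/4}$ rate and the intermediate regime is dispatched by combining the two partial bounds, but the bookkeeping needed to nail down the explicit constant~$5$ is the routine but not entirely trivial step.
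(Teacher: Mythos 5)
Your upper bound (the two trial states $|x=0\rangle$ and $|\pi(0)\rangle$) and your concavity reduction to $s=\tfrac12$ are both correct and essentially identical to the paper's. The gap is in the core estimate $\lambda_{\min}(B+V_\pi)\geq \tfrac n2(1-O(n^{-1/4}))$. Your plan is a case analysis on $\beta=\langle B\rangle/n$ and $\gamma=\langle V_\pi\rangle/n$: if $\beta$ is small, Markov-localize in the $x$-basis and use a Frobenius bound on $\Pi E_\pi\Pi$; if $\gamma$ is small, do the symmetric thing in the $z$-basis; then ``combine.'' This does not close, for two reasons. First, within the small-$\beta$ case the cross terms are not controllable: you must bound $\langle\psi|\Pi E_\pi Q|\psi\rangle$ and $\langle\psi|QE_\pi Q|\psi\rangle$ with $Q=1-\Pi$, and the only available operator-norm bound there is $\|E_\pi\|=n/2$ (the Frobenius norm of the off-block is $\sim\sqrt{N_k\,n}$, which is enormous). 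This gives at best $\gamma\geq \tfrac12-O(\sqrt{\beta n/k})-O(\beta n/k)$, and since $k$ can be at most $\approx 0.11n$ (you need $N_k\ll 2^{n/2}$ for the Frobenius step), the error is $O(\sqrt{\beta})$, which beats the $+\beta$ you gain; the function $\beta+\tfrac12-C\sqrt{\beta}$ dips a \emph{constant} below $\tfrac12$ already at constant $\beta$. Second, and independently, the two regimes do not cover state space: for a state with, say, $\beta=\gamma=0.2$ neither Markov bound localizes anything (the excluded weight bounds $\beta n/k$ and $\gamma n/m$ are $\Omega(1)$), yet $\beta+\gamma=0.4<\tfrac12$ is exactly the kind of state you must rule out. ``Combining the two partial bounds'' is precisely the missing argument, not a routine bookkeeping step.

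The paper avoids the case analysis entirely with a Perron--Frobenius (Collatz--Wielandt) lower bound for stoquastic matrices: $E_g\geq\min_z \langle z|H|\phi\rangle/\langle z|\phi\rangle$ for any entrywise-positive $|\phi\rangle$. This is an operator inequality that handles all states, including the intermediate ones, at once. The test vector takes only two values, $\lambda n$ on the random low-energy set $S_\pi=\{z: W(\pi^{-1}(z))\leq nc\}$ and $1$ elsewhere, and the single probabilistic input is combinatorial: with high probability no string has more than $\approx 2\sqrt n$ of its one-bit-flip neighbours inside the random set $S_\pi$ (your remark that $F$ is ``supported only on hypercube-neighbour pairs'' is pointing at this fact, but you never invoke a degree bound for the scrambled set). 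If you want to salvage your outline, you should replace the $\beta$/$\gamma$ dichotomy with this kind of global supersolution argument; as written, the lower bound is not proved.
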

\end{framed}

By $Pr\left[event\right]$ we mean the probability that the \textit{event
}occurs where the distribution is over $\pi$ chosen with equal probability
from the $2^{n}!$ permutations.

Our next result is that the minimum gap is exponentially small in
$n$. For each $\pi$ let \[
g_{\pi}=\min_{0\leq s\leq1}\gamma_{\pi}(s).\]
 where $\gamma_{\pi}(s)$ is the eigenvalue gap between the two lowest
energy levels of $H_{\pi}(s)$. %
\begin{framed}%
\begin{thm}
\label{thm:gapthm}Let $\pi$ be a random permutation. Then \[
\text{Pr}\left[g_{\pi}\leq n^{2}2^{-\frac{n}{6}}\right]\rightarrow1\text{ as }n\rightarrow\infty.\]

\end{thm}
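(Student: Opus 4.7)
The plan is to argue by contradiction, combining the Scrambled Theorem of \cite{2005quant.ph.12159F} with the quantitative Adiabatic Theorem of \cite{2007JMP....48j2111J}. Suppose the claim fails: then there is a constant $\delta > 0$ and infinitely many $n$ for which $\text{Pr}[g_\pi > n^2 2^{-n/6}] \ge \delta$. I will use this hypothesized lower bound on $g_\pi$ to construct a quantum algorithm that finds the minimum of the scrambled cost function with $o(2^{n/2})$ oracle queries to $E$, contradicting the Scrambled Theorem.

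Given oracle access to $E(z) = W(\pi^{-1}(z))$, I would consider the algorithm that simulates the adiabatic evolution under $H_\pi(s)$ along a smooth schedule from $s=0$ to $s=1$, then measures in the computational basis and verifies the outcome with one further query. The driver $\sum_i(1-\sigma_x^i)/2$ is $\pi$-independent and requires no queries, while the diagonal problem Hamiltonian is a single-query operator, so standard Trotterization simulates the full evolution for total time $T$ using $\text{poly}(n,T)$ queries.

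Next I would apply the Adiabatic Theorem. Since $\|H_\pi(s)\|$, $\|\partial_s H_\pi(s)\|$ and $\|\partial_s^2 H_\pi(s)\|$ are all $O(n)$, a suitably smooth schedule with $T = \text{poly}(n)/g_\pi^2$ prepares the ground state $|\pi(0)\rangle$ with probability at least $1/2$. Under the assumption $g_\pi > n^2 2^{-n/6}$ this gives $T \le 2^{n/3}/\text{poly}(n)$, and hence at most $2^{n/3+o(n)} = o(2^{n/2})$ oracle queries in total. Averaging over $\pi$, the algorithm returns $\pi(0)$ with probability at least $\delta/2$, a positive constant; this contradicts the Scrambled Theorem's $\Omega(2^{n/2})$ query lower bound for locating the minimum of a random-permutation-scrambled cost function.

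The main obstacle is just quantitative bookkeeping. The prefactor $n^2$ in the claimed bound is what absorbs the polynomial overheads coming from $\|H_\pi\|$, its derivatives, and the Trotter error, while the exponent $n/6$ is chosen so that the adiabatic cost $1/g_\pi^2 \sim 2^{n/3}$ sits strictly below the query lower bound $2^{n/2}$ with polynomial slack. A sharper exponent than $n/6$ would require a sharper form of the Adiabatic Theorem than the standard $T \propto 1/g^2$ bound; as it stands, the argument is a clean reduction with no new ingredients required.
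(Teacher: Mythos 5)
Your reduction is the same one the paper uses: assume the gap is large for a non-negligible fraction of permutations, run the adiabatic evolution for a time controlled by the adiabatic theorem, and contradict the $\Omega(\sqrt{N})$ lower bound of the Scrambled Theorem. Two quantitative points in your write-up deserve attention, one of which is a genuine soft spot. First, the rigorous adiabatic theorem the paper invokes (Jansen--Ruskai--Seiler, with the linear-in-$s$ schedule already built into $H_\pi(s)$) gives $T \sim n^2/g_\pi^3$, not $\mathrm{poly}(n)/g_\pi^2$; the exponent $n/6$ in the statement is exactly $\tfrac{1}{3}\cdot\tfrac{n}{2}$, i.e.\ it comes from cubing the gap against $\sqrt{N}$, so your remark that $n/6$ is what the $1/g^2$ scaling buys (and that improving it needs a sharper adiabatic theorem) has the logic backwards --- with a genuine $1/g^2$ bound you could push the exponent toward $n/4$. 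Fortunately either exponent yields the contradiction, since $2^{n/2}/n^4 < \tfrac{\delta^2}{64n}\sqrt{N}$ for large $n$. Second, and more seriously, your detour through Trotterization and oracle queries is both unnecessary and dangerous: the Scrambled Theorem as used here lower-bounds the \emph{continuous evolution time} $T$ of any Hamiltonian of the form $H_D(t) + c(t)\sum_z h(\pi^{-1}(z))|z\rangle\langle z|$, so $H_\pi(t/T)$ is already of the required form and no discretization is needed. If you insist on converting to a discrete query count, a standard first-order product formula needs on the order of $T^2$ steps to reach constant error, and $T^2 \sim 2^{2n/3}$ already exceeds your $2^{n/2}$ budget; making the query-based version airtight would require a more efficient simulation argument that your proposal does not supply. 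Dropping the Trotter step and applying the time lower bound directly, as the paper does, closes this gap; your constant-$\delta$ contradiction framing (versus the paper's choice $\epsilon = 64 n^{-3/2}$) is fine, since letting $\delta \to 0$ along a sequence of $n$ recovers the stated convergence in probability.
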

\end{framed}

Our last result is that typically $\gamma_{\pi}(s)$ is larger than
some $n$ independent constant for $s$ near $1$.

\begin{framed}%
\begin{thm}
\label{thm:nopert_cross}

Let $\pi$ be a random permutation. Then \[
\text{Pr}\left[\gamma_{\pi}(s)>0.8\text{ for }.9\leq s\leq1\right]\rightarrow1\text{ as \ensuremath{n\rightarrow\infty.}}\]

\end{thm}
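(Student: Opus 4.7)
My plan is to fix $s \in [0.9, 1]$ and use the variational upper bound $E_0(H_\pi(s)) \le \langle\pi(0)|H_\pi(s)|\pi(0)\rangle = (1-s)n/2$, which reduces the goal to showing $E_1(H_\pi(s)) \ge (1-s)n/2 + 0.8$ with probability tending to $1$. I would do this by a Feshbach--Schur reduction onto the low-cost subspace $\mathcal{L} := \operatorname{span}\{|\pi(y)\rangle : W(y) \le 1\}$ of dimension $n+1$. With $P$ the orthogonal projection onto $\mathcal{L}$, $Q = I - P$, and the block pieces $A := PH_\pi(s)P$, $W := PH_\pi(s)Q$, $H_{\mathcal{B}} := QH_\pi(s)Q$, an eigenvalue $E$ of $H_\pi(s)$ with nonzero component in $\mathcal{L}$ is equivalently a root of $\det(\mathcal{F}(E) - EI) = 0$ where $\mathcal{F}(E) := A - W(H_{\mathcal{B}} - E)^{-1}W^*$.

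A union bound over $\pi$ gives, with probability $\to 1$, the good event: (a) any two distinct elements of $\{\pi(y) : W(y) \le 2\}$ are at Hamming distance at least $n/3$, and (b) for every $y$ with $W(y) \le 1$ and every $i \in [n]$, $W(\pi^{-1}(\pi(y) \oplus e_i)) \ge n/3$. Both failure probabilities are exponentially small since for each constant $k$ the set $\{y : W(y) \le k\}$ is a vanishing fraction of $\{0,1\}^n$. On this event, $A$ is diagonal in the basis $\{|\pi(y)\rangle\}_{W(y) \le 1}$ with a single eigenvalue $(1-s)n/2$ at $y=0$ and an $n$-fold eigenvalue $s + (1-s)n/2$ for $W(y)=1$, so its internal gap is exactly $s \ge 0.9$; and $W^*|\pi(y)\rangle = -((1-s)/2)\sum_i |\pi(y)\oplus e_i\rangle$, with each target in $\mathcal{B}$ and having diagonal $H_{\mathcal{B}}$-entry at least $sn/3 + (1-s)n/2$.

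The heart of the argument is the estimate $\|\mathcal{F}(E) - A\| \le c(1-s)^2/s$, uniformly in $E$ on a constant-width neighborhood of $(1-s)n/2$, for a universal $c$. Replacing $H_{\mathcal{B}}$ by its diagonal gives each diagonal matrix element of $W(H_{\mathcal{B}} - E)^{-1}W^*$ on $\mathcal{L}$ equal to $n \cdot ((1-s)/2)^2 \cdot (sn/3 + O(1))^{-1} = O((1-s)^2/s)$, while off-diagonal matrix elements vanish at leading order by property (a). Incorporating the off-diagonal hopping in $H_{\mathcal{B}}$ through a matrix-element-by-matrix-element resolvent expansion yields corrections further suppressed by factors of $(1-s)/s$, so that the two smallest eigenvalues of $\mathcal{F}(E)$ still differ by $s - O((1-s)^2/s) \ge 0.88$. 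Solving the Feshbach fixed-point equation identifies these as the two smallest eigenvalues of $H_\pi(s)$ whose eigenvectors have nonzero $\mathcal{L}$-component. Any eigenvalue of $H_\pi(s)$ with zero $\mathcal{L}$-component is an eigenvalue of $H_{\mathcal{B}}$ restricted to $\ker W$, and the same argument applied one Hamming-weight shell deeper bounds these from below by $2s + (1-s)n/2 - O((1-s)^2/s)$, comfortably above $(1-s)n/2 + 0.8$.

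The main obstacle is controlling $(H_{\mathcal{B}} - E)^{-1}$. Since $\|W\| = \sqrt{n}(1-s)/2 = O(\sqrt{n})$ while $\operatorname{dist}(E, \sigma(H_{\mathcal{B}}))$ is only $O(1)$, the naive operator-norm bound $\|W\|^2 \|(H_{\mathcal{B}}-E)^{-1}\|$ diverges with $n$ and is useless. To extract the correct scaling $O((1-s)^2/s)$ one must bound $(H_{\mathcal{B}}-E)^{-1}$ only on the specific image of $W^*$: property (b) ensures that there the diagonal of $H_{\mathcal{B}}$ acts with amplitude only $O(1/n)$, so the Neumann series in the off-diagonal hopping converges term-by-term even though it diverges in operator norm. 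Uniformity in $s \in [0.9, 1]$ is automatic because $(1-s)^2/s \le 1/9$ on this interval and the good event of $\pi$ is selected independently of $s$, while the continuity of eigenvalues in $s$ propagates the bound to the whole closed interval.
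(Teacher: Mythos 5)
Your overall architecture (variational upper bound $E_0\le (1-s)n/2$, then a lower bound on $E_1$) matches the paper, but your route to the $E_1$ bound --- a Feshbach--Schur reduction onto $\mathrm{span}\{|\pi(y)\rangle: W(y)\le 1\}$ with a resolvent expansion of $W(H_{\mathcal B}-E)^{-1}W^{*}$ --- is genuinely different from the paper's, and as written it has a real gap at exactly the point you flag as ``the main obstacle.'' Your good event only controls the \emph{first} shell of neighbors of $\mathcal L$ (property (b)), so only the first and last resolvent factors in the Neumann series $(D-E)^{-1}\sum_m[-V(D-E)^{-1}]^m$ are guaranteed to be $O(1/n)$. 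The intermediate sites of a length-$(m+2)$ walk can land on images of weight-$2$ strings, where $D_z-E=2s-O(1)$ is only $O(1)$; since the number of closed walks of length $m+2$ grows like $n^{(m+2)/2}$ while each hop contributes only $(1-s)/2$, the $m$-th term is then of order $((1-s)\sqrt n/s)^{m}\cdot (1-s)^2/(s n)$, which is not summable for fixed $s<1$ and large $n$. To repair this you would need a stronger good event (no two low-cost strings within any constant Hamming distance --- essentially the paper's Lemma~2 for general $k$) \emph{and} an a priori lower bound on $\mathrm{spec}(H_{\mathcal B})$ to control the tail of the series and to justify that $(H_{\mathcal B}-E)^{-1}$ exists at all for $E$ near $(1-s)n/2$. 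Your remark that the latter follows from ``the same argument one shell deeper'' is circular as stated: each application of the Feshbach map requires invertibility of the next block, and you never close the recursion.

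For comparison, the paper avoids resolvents entirely. It writes $E_{1,\pi}(s)\ge\min_{\psi\perp|\pi(0)\rangle}\langle\psi|H_\pi(s)|\psi\rangle$, observes that the right-hand side is the ground energy of the stoquastic matrix obtained by deleting the row and column of $|\pi(0)\rangle$, and applies its Collatz--Wielandt lower bound (Theorem~\ref{thm:collatz-wielandt}) with an explicit two-valued trial vector ($\mu n$ on the low-cost set $\tilde S_\pi$, $1$ elsewhere, $0$ at $\pi(0)$). The only probabilistic input is Lemma~\ref{lem:bitlemma} with $k=2$: no two elements of $\tilde S_\pi$ are one bit flip apart, and no string has two neighbors in $\tilde S_\pi$. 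This yields $E_{1,\pi}(s)\ge(1-s)n/2+\mathrm{const}$ in a few lines, with explicit constants. Your approach, if completed, would give finer information (the effective $(n+1)\times(n+1)$ Hamiltonian near the crossing), but the missing resolvent control is the hard part of that program, not a routine detail.
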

\end{framed}

\subsection{A Lower Bound For the Ground State Energy of Stoquastic Hamiltonians}

For any normalized state $|\psi\rangle,$ and any Hamiltonian $H$,
the quantity $\langle\psi|H|\psi\rangle$ is an upper bound on the
ground state energy. It is less well known that, for some special
Hamiltonians, one can also obtain a variational lower bound on the
ground state energy of $H$. Theorem \ref{thm:collatz-wielandt} (given
below) is an example of such a bound. It is an elementary application
of the Collatz Wielandt min-max formula \cite{meyer01}, but we prove
it here for completeness. We restrict to Hamiltonian matrices $H$
where all off diagonal matrix elements are real and nonpositive. These
are called stoquastic Hamiltonians (note that stoquasticity is a basis
dependent notion).
\begin{thm}
\label{thm:collatz-wielandt}Let $H$ be a Hermitian operator, stoquastic
in the $|z\rangle$ basis, that is $\langle z|H|z^{\prime}\rangle\leq0$
for $z\neq z^{\prime}.$ Let $E_{g}$ be its lowest eigenvalue. Then
\[
E_{g}\geq\min_{z}\frac{\langle z|H|\phi\rangle}{\langle z|\phi\rangle}\]
 for any state $|\phi\rangle$ such that $\langle z|\phi\rangle>0$
for all $z$. 
\end{thm}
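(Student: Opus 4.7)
The plan is to reduce the statement to the easy half of the Collatz--Wielandt variational characterization of the Perron eigenvalue of a non-negative matrix. First I would shift and negate $H$: choose a real constant $c$ large enough that $M=cI-H$ is entrywise non-negative. This is possible because $H$ is stoquastic in the $|z\rangle$ basis, so all off-diagonal entries of $M$ are non-negative, and because the diagonal entries $c-\langle z|H|z\rangle$ can be made non-negative by taking $c\geq\max_z\langle z|H|z\rangle$. The largest eigenvalue of $M$ is then $\lambda_{\max}(M)=c-E_g$.

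Next I would invoke the Perron--Frobenius theorem: a non-negative real matrix admits an eigenvector $|\chi\rangle\neq 0$ with eigenvalue $\lambda_{\max}(M)$ whose entries $\chi_z=\langle z|\chi\rangle$ are all non-negative (irreducibility is not required for this weaker form). Since $|\chi\rangle\neq 0$ with non-negative entries, at least one $\chi_z$ is strictly positive, and combined with the hypothesis $\langle z|\phi\rangle>0$ for all $z$ this yields $\langle\chi|\phi\rangle=\sum_z \chi_z\langle z|\phi\rangle>0$.

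I would then compute $\langle\chi|M|\phi\rangle$ in two ways. On one hand, since $M$ is Hermitian with $M|\chi\rangle=(c-E_g)|\chi\rangle$, it equals $(c-E_g)\langle\chi|\phi\rangle$. On the other hand, it can be rewritten as a convex combination
\[
\frac{\langle\chi|M|\phi\rangle}{\langle\chi|\phi\rangle}=\sum_z \frac{\chi_z\langle z|\phi\rangle}{\langle\chi|\phi\rangle}\cdot\frac{\langle z|M|\phi\rangle}{\langle z|\phi\rangle},
\]
with non-negative weights summing to one. It is therefore at most $\max_z \langle z|M|\phi\rangle/\langle z|\phi\rangle=c-\min_z\langle z|H|\phi\rangle/\langle z|\phi\rangle$. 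Equating the two expressions for $\langle\chi|M|\phi\rangle/\langle\chi|\phi\rangle$ and canceling $c$ delivers the desired inequality $E_g\geq\min_z \langle z|H|\phi\rangle/\langle z|\phi\rangle$.

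The only delicate step is the Perron--Frobenius application: without irreducibility one only obtains non-negativity, not strict positivity, of $|\chi\rangle$, so I must be careful that the weighted-average argument still has a positive denominator, which is exactly what the calculation above checks. If I wanted to avoid relying on the non-irreducible version altogether, I would perturb $M$ to $M+\varepsilon J$ (with $J$ the all-ones matrix, which makes the matrix irreducible and entrywise positive), apply the classical strong Perron--Frobenius theorem to extract a strictly positive Perron eigenvector for each $\varepsilon>0$, run the argument to get an $\varepsilon$-dependent bound, and then pass to the limit $\varepsilon\to 0$ using continuity of eigenvalues and eigenvectors in the perturbation parameter. The remaining steps are purely mechanical.
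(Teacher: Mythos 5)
Your proof is correct, but it takes a genuinely different route from the one in the paper. You reduce the claim to the easy half of the classical Collatz--Wielandt inequality: shift and negate to get the entrywise non-negative matrix $M=cI-H$, extract from the weak Perron--Frobenius theorem a non-negative eigenvector $|\chi\rangle$ for $\rho(M)=\lambda_{\max}(M)=c-E_g$ (which is a ground state of $H$), and pair it against the strictly positive test vector $|\phi\rangle$ to write $(c-E_g)$ as a convex combination of the ratios $\langle z|M|\phi\rangle/\langle z|\phi\rangle$. The one quietly used fact --- that for a symmetric non-negative matrix the spectral radius coincides with the largest eigenvalue, so the Perron eigenvector really does sit at the top of the spectrum of $M$ --- is standard and easy to justify, and your observation that strict positivity of $|\chi\rangle$ is unnecessary (because $\langle\chi|\phi\rangle>0$ already follows from $|\chi\rangle\neq0$, $\chi_z\geq0$, and $\langle z|\phi\rangle>0$) is exactly right, so the $\varepsilon J$ fallback is not needed. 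The paper instead subtracts a diagonal operator from $H$ to manufacture a Hamiltonian $\hat{H}$ for which $|\phi\rangle$ is an exact eigenvector with eigenvalue $0$, invokes its Lemma~\ref{lem:PF} (a Perron--Frobenius--type uniqueness statement proved from the variational principle alone) to conclude that $0$ is the ground state energy of $\hat{H}$, and then compares $\langle\psi|H|\psi\rangle$ with $\langle\psi|\hat{H}|\psi\rangle$ for a ground state $|\psi\rangle$ of $H$. The trade-off: your argument is shorter if one is willing to cite the non-irreducible Perron--Frobenius theorem as a black box, whereas the paper's argument is self-contained, deriving the needed positivity fact from first principles; the two proofs are dual in spirit, yours pairing $|\phi\rangle$ against a non-negative ground state of $H$, the paper's promoting $|\phi\rangle$ itself to the ground state of a modified operator.
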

The following Lemma will be used in our proof of Theorem \ref{thm:collatz-wielandt}.
\begin{lem}
\label{lem:PF}Let $H$ be a Hermitian operator, stoquastic in the
$|z\rangle$ basis, that is $\langle z|H|z^{\prime}\rangle\leq0$
for $z\neq z^{\prime}.$ If $H|\psi\rangle=E|\psi\rangle$ and $\langle z|\psi\rangle>0$
for all $z$, then $E=E_{g}$, the ground state energy of $H.$ \end{lem}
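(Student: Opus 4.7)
The plan is to exploit stoquasticity in the manner of Perron--Frobenius to show that any strictly positive eigenvector must correspond to the lowest eigenvalue. Since $H$ is Hermitian and its off-diagonal matrix elements are real (nonpositive by hypothesis), the diagonal entries are also real, so $H$ acts as a real symmetric matrix in the $|z\rangle$ basis. I can therefore pick a normalized ground state $|\psi_g\rangle$ with real amplitudes $\psi_g(z) = \langle z|\psi_g\rangle$.

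The first step is to construct an auxiliary state $|\phi\rangle$ via $\langle z|\phi\rangle = |\psi_g(z)|$ and argue that $|\phi\rangle$ itself lies in the ground-state eigenspace of $H$. Expanding $\langle\phi|H|\phi\rangle$ and $\langle\psi_g|H|\psi_g\rangle$ in the $|z\rangle$ basis, the diagonal contributions match term by term. For each off-diagonal pair $z \neq z'$, the hypothesis $\langle z|H|z'\rangle \leq 0$ together with $\psi_g(z)\psi_g(z') \leq |\psi_g(z)||\psi_g(z')|$ gives
\[
\langle z|H|z'\rangle\,|\psi_g(z)|\,|\psi_g(z')| \;\leq\; \langle z|H|z'\rangle\,\psi_g(z)\psi_g(z'),
\]
so summing yields $\langle\phi|H|\phi\rangle \leq \langle\psi_g|H|\psi_g\rangle = E_g$. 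Since $|\phi\rangle$ is normalized, the variational principle also forces $\langle\phi|H|\phi\rangle \geq E_g$, so equality holds and $|\phi\rangle$ is an eigenvector of $H$ with eigenvalue $E_g$.

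The second step is to derive a contradiction from the assumption $E \neq E_g$. By hypothesis every amplitude $\langle z|\psi\rangle$ is strictly positive, while every $\langle z|\phi\rangle = |\psi_g(z)|$ is non-negative with at least one strict inequality (since $|\psi_g\rangle$ is nonzero). Hence
\[
\langle\phi|\psi\rangle \;=\; \sum_z |\psi_g(z)|\,\langle z|\psi\rangle \;>\; 0.
\]
But if $E \neq E_g$, then $|\psi\rangle$ and $|\phi\rangle$ would be eigenvectors of the Hermitian operator $H$ belonging to distinct eigenvalues, and hence orthogonal. This contradiction forces $E = E_g$.

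I do not expect a serious obstacle. The only delicate point is keeping the sign right in the off-diagonal comparison of the first step: the inequality has to flow in the direction that bounds $\langle\phi|H|\phi\rangle$ from \emph{above} by $E_g$, and this is exactly what stoquasticity ($\langle z|H|z'\rangle \leq 0$) delivers. Everything else is routine linear algebra in a finite-dimensional space.
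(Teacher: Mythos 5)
Your proof is correct and follows essentially the same Perron--Frobenius strategy as the paper: construct the state with absolute-valued amplitudes, use stoquasticity plus the variational principle to show it lies in the ground-state eigenspace, and then derive a contradiction from the fact that it cannot be orthogonal to the strictly positive eigenvector $|\psi\rangle$. The only difference is cosmetic --- the paper organizes the off-diagonal comparison by partitioning $z$ into sign sets $S_+$, $S_-$, $S_0$ and writes the difference of expectation values as $-4\sum_{z\in S_+}\sum_{z'\in S_-}\langle z|\psi_g\rangle\langle z'|\psi_g\rangle\langle z|H|z'\rangle$, whereas you obtain the same bound directly from the term-by-term inequality $\langle z|H|z'\rangle\,|\psi_g(z)||\psi_g(z')| \leq \langle z|H|z'\rangle\,\psi_g(z)\psi_g(z')$.
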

\begin{proof}
Suppose $E>E_{g}$. Let $|\psi_{g}\rangle$ be any (normalized) state
in the ground state subspace of $H$, so that $H|\psi_{g}\rangle=E_{g}|\psi_{g}\rangle$.
Divide the values of $z$ into $3$ sets:\begin{eqnarray*}
S_{+} & = & \left\{ z:\:\langle z|\psi_{g}\rangle>0\right\} \\
S_{-} & = & \left\{ z:\:\langle z|\psi_{g}\rangle<0\right\} \\
S_{0} & = & \left\{ z:\:\langle z|\psi_{g}\rangle=0\right\} .\end{eqnarray*}
Define $|\psi_{g}^{\prime}\rangle$ by \[
\langle z|\psi_{g}^{\prime}\rangle=\left|\langle z|\psi_{g}\rangle\right|.\]
Then $|\psi_{g}^{\prime}\rangle$ is normalized and by the variational
principle \[
\langle\psi_{g}^{\prime}|H|\psi_{g}^{\prime}\rangle-\langle\psi_{g}|H|\psi_{g}\rangle\geq0.\]
 But \begin{eqnarray*}
\langle\psi_{g}^{\prime}|H|\psi_{g}^{\prime}\rangle-\langle\psi_{g}|H|\psi_{g}\rangle & = & -4\sum_{z\in S_{+}}\sum_{z^{\prime}\in S_{-}}\langle z|\psi_{g}\rangle\langle z^{\prime}|\psi_{g}\rangle\langle z|H|z^{\prime}\rangle\\
 & \leq & 0\end{eqnarray*}
where in the last line we have used the stoquasticity of $H$. Therefore
\[
\langle\psi_{g}^{\prime}|H|\psi_{g}^{\prime}\rangle=\langle\psi_{g}|H|\psi_{g}\rangle=E_{g}\]
 and so $H|\psi_{g}^{\prime}\rangle=E_{g}|\psi_{g}^{\prime}\rangle$.
This implies $\langle\psi|\psi_{g}^{\prime}\rangle=0$, which is impossible
since $|\psi\rangle$ has only positive coefficients in the $z$ basis
and $|\psi_{g}^{\prime}\rangle$ has nonnegative coefficients. So
we have reached a contradiction and therefore $E=E_{g}$.
\end{proof}

\subsection*{Proof of Theorem \ref{thm:collatz-wielandt}}

Let $\hat{H}$ be \[
\hat{H}=H-\sum_{z}|z\rangle\langle z|\frac{\langle z|H|\phi\rangle}{\langle z|\phi\rangle}.\]
 Note that $|\phi\rangle$ is an eigenvector of $\hat{H}$ with eigenvalue
$0$. The ground state energy of $\hat{H}$ is $0$ by Lemma \ref{lem:PF}. 

For any normalized state $|\psi\rangle$ we have \begin{eqnarray*}
\langle\psi|H|\psi\rangle-\langle\psi|\hat{H}|\psi\rangle & = & \sum_{z}\left|\langle z|\psi\rangle\right|^{2}\left[\frac{\langle z|H|\phi\rangle}{\langle z|\phi\rangle}\right]\\
 & \geq & \min_{z}\left[\frac{\langle z|H|\phi\rangle}{\langle z|\phi\rangle}\right].\end{eqnarray*}
Choosing $|\psi\rangle$ to be a ground state of $H$ we have \[
E_{g}-\langle\psi|\hat{H}|\psi\rangle\geq\min_{z}\left[\frac{\langle z|H|\phi\rangle}{\langle z|\phi\rangle}\right].\]
But $\langle\psi|\hat{H}|\psi\rangle\geq0$ since $\hat{H}$ has ground
state energy $0$, so \begin{eqnarray*}
E_{g} & \geq & \min_{z}\frac{\langle z|H|\phi\rangle}{\langle z|\phi\rangle}.\end{eqnarray*}

\subsection{Proof of Theorem \ref{thm:mainthm}}

First for the upper bound. Recall that \[
|x=0\rangle=\frac{1}{\sqrt{2^{n}}}\sum_{z}|z\rangle\]
 is the ground state of $H_{\pi}(0)$. Using this state we get an
upper bound on the ground state energy \begin{equation}
E_{\pi}(s)\leq\langle x=0|H_{\pi}(s)|x=0\rangle=s\frac{n}{2}.\label{eq:upperbnd_E}\end{equation}
 Recall that $|\pi(0)\rangle$ is the ground state of $H_{\pi}(1).$
Using this state we get the upper bound\[
E_{\pi}(s)\leq\langle\pi(0)|H_{\pi}(s)|\pi(0)\rangle=\left(1-s\right)\frac{n}{2}\]
for $0\leq s\leq1.$ These two inequalities establish the upper bound
in Theorem \ref{thm:mainthm}.

Proving the lower bound is more involved. We first use Theorem \ref{thm:collatz-wielandt}
to show that with high probability the ground state energy $E_{\pi}(s^{\star})$
at $s^{\star}$ just above $\frac{1}{2}$ is close to the value $n\cdot e(s^{\star})$.
Then we use the concavity of $E_{\pi}(s)$ to obtain the lower bound
for all $s\in[0,1]$.

We use an ansatz $|A_{\pi}\rangle$ for the ground state of $H_{\pi}(s^{\star})$
of the form \begin{equation}
\langle z|A_{\pi}\rangle=\begin{cases}
1 & ,\text{ z}\notin S_{\pi}\\
\lambda n & ,\text{ z}\in S_{\pi}\end{cases}\label{eq:API}\end{equation}
 where $\lambda$ satisfies $\lambda n>1$ for $n$ large and $S_{\pi}$
is a set of low energy bit strings, \[
S_{\pi}=\left\{ z:\: W(\pi^{-1}(z))\leq nc\right\} \]
 with $c<\frac{1}{2}$. We can view $S_{\pi}$ as the image under
$\pi$ of the set of strings with Hamming weight less than $nc$:\begin{eqnarray*}
S_{\pi} & = & \pi(L)\\
L & = & \left\{ z:\: W(z)\leq nc\right\} .\end{eqnarray*}
 The following lemma bounds the probability that a group of bit strings
related by single bit flips are all in a random set.
\begin{lem}
\label{lem:bitlemma}Let M be a set of $n$-bit strings such that
\[
\left|M\right|\leq2^{n\gamma}\]
 for some $0<\gamma<1$. Let k be an integer. Let $\pi$ be a random
permutation of the set of all $n$-bit strings, and consider $\pi(M)$,
the image of the set $M$. Let $p$ be the probability that there
exists an $n$-bit string $z$ and a set of $k-1$ bits $\left\{ i_{1},i_{2},...,i_{k-1}\right\} $
such that $\left\{ z,z\oplus e_{i_{1}},...,z\oplus e_{i_{k-1}}\right\} \subseteq\pi(M).$
Similarly let $q$ be the probability that there exists an $n$-bit
string $y$ and a set of $k$ bits $\left\{ j_{1},j_{2},...,j_{k}\right\} $
such that $\left\{ y\oplus e_{j_{1}},...,y\oplus e_{j_{k}}\right\} \subseteq\pi(M).$
Then \[
p\leq n^{k}2^{n\left[1-k\left(1-\gamma\right)\right]}\]
 and \[
q\leq n^{k}2^{n\left[1-k\left(1-\gamma\right)\right]}\:.\]
 \end{lem}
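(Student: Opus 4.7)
The plan is a union bound over the choice of ``center'' string together with the set of flipped bit positions. The key observation is that for a fixed set $M$, the image $\pi(M)$ under a uniform random permutation is distributed as a uniform random subset of size $|M|$ drawn from all $2^n$ strings. Hence for any fixed collection of $k$ distinct strings $z_1,\dots,z_k$,
\[
\Pr\bigl[\{z_1,\dots,z_k\}\subseteq\pi(M)\bigr] = \frac{|M|(|M|-1)\cdots(|M|-k+1)}{2^n(2^n-1)\cdots(2^n-k+1)} \;\leq\; \left(\frac{|M|}{2^n}\right)^k \;\leq\; 2^{nk(\gamma-1)},
\]
where the first inequality uses $(|M|-i)/(2^n-i)\le |M|/2^n$ for $i\ge 0$ when $|M|\le 2^n$, and the bound is trivially $0$ when $|M|<k$.

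For the bound on $p$ I would parametrize each ``star'' in $\pi(M)$ by its center $z$ (chosen from the $2^n$ possible strings) and by the $k-1$ flipped bit positions $\{i_1,\dots,i_{k-1}\}\subseteq[n]$, noting that the resulting $k$ strings $\{z, z\oplus e_{i_1},\dots,z\oplus e_{i_{k-1}}\}$ are automatically distinct. The union bound together with the estimate above then gives
\[
p \;\leq\; 2^n \binom{n}{k-1}\, 2^{nk(\gamma-1)} \;\leq\; n^{k-1}\, 2^{n[1-k(1-\gamma)]} \;\leq\; n^{k}\, 2^{n[1-k(1-\gamma)]}.
\]
The argument for $q$ is essentially identical: the center $y$ need not itself belong to $\pi(M)$, so we enumerate over $y$ and over $k$ bit positions, which replaces $\binom{n}{k-1}$ by $\binom{n}{k}\le n^k$ and yields the same final bound.

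The proof requires no delicate step beyond careful bookkeeping; the main point is the estimate on the probability that $k$ specific strings all lie in $\pi(M)$, after which $p$ and $q$ differ only in whether the center of the star is counted among the $k$ strings. The mild overcounting implicit in parametrizing each $k$-tuple by a choice of center is harmless because it only loosens the inequality, and the factor $n^k$ in the final bound absorbs the difference between $\binom{n}{k-1}$ and $\binom{n}{k}$.
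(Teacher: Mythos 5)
Your proposal is correct and follows essentially the same route as the paper's proof: the exact hypergeometric probability $\frac{|M|(|M|-1)\cdots(|M|-k+1)}{2^{n}(2^{n}-1)\cdots(2^{n}-k+1)}\leq\left(|M|/2^{n}\right)^{k}$ for $k$ fixed distinct strings, followed by a union bound over the $2^{n}$ centers and the $\binom{n}{k-1}$ (respectively $\binom{n}{k}$) choices of bit positions. No further comment is needed.
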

\begin{proof}
Fix a particular $z$ and a set of $k-1$ bits $\left\{ i_{1},i_{2},...,i_{k-1}\right\} $.
Then \begin{eqnarray*}
\text{Pr}\left[\left\{ z,z\oplus e_{i_{1}},...,z\oplus e_{i_{k-1}}\right\} \subseteq\pi(M)\right] & = & \frac{\left|M\right|\cdot\left(\left|M\right|-1\right)...\left(\left|M\right|-k+1\right)}{2^{n}\cdot\left(2^{n}-1\right)...\left(2^{n}-k+1\right)}\\
 & \leq & \left(\frac{\left|M\right|}{2^{n}}\right)^{k}.\end{eqnarray*}
 There are $2^{n}$ choices for $z$ and $\bigg(\begin{array}{c}
n\\
k-1\end{array}\bigg)$ choices for the set of $k-1$ bits so by a union bound \begin{eqnarray*}
p & \leq & 2^{n}\bigg(\begin{array}{c}
n\\
k-1\end{array}\bigg)\left(\frac{\left|M\right|}{2^{n}}\right)^{k}\\
 & \leq & n^{k}2^{n\left[1-k\left(1-\gamma\right)\right]}\:.\end{eqnarray*}
 The proof for $q$ is very similar.\end{proof}
\begin{lem}
\label{lem:Lemma}Let $s^{\star}=\frac{1}{2}+n^{-\frac{1}{4}}$. Let
$\pi$ be a random permutation. Then \[
\text{Pr}\left[E_{\pi}(s^{\star})\geq\left(\frac{1-s^{\star}}{2}\right)\left(n-n^{\frac{3}{4}}\right)\right]\rightarrow1\text{ as }n\rightarrow\infty.\]
\end{lem}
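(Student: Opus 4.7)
The plan is to apply Theorem~\ref{thm:collatz-wielandt} to the trial state $|A_\pi\rangle$ defined in equation~(\ref{eq:API}). The Hamiltonian $H_\pi(s^\star)$ is stoquastic in the $|z\rangle$ basis: every off-diagonal matrix element is either zero or $-(1-s^\star)/2 \le 0$. Collatz--Wielandt therefore gives $E_\pi(s^\star) \ge \min_z R(z)$, where
\[
  R(z) \;=\; \frac{\langle z|H_\pi(s^\star)|A_\pi\rangle}{\langle z|A_\pi\rangle},
\]
so it suffices to choose $c$ and $\lambda$ such that with high probability over $\pi$ every $z$ satisfies $R(z) \ge \tfrac{1-s^\star}{2}(n - n^{3/4})$.

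A direct calculation, splitting on whether $z \in S_\pi$, expresses $R(z)$ in terms of $W(\pi^{-1}(z))$ and the neighbor count $k_z := |\{i : z \oplus e_i \in S_\pi\}|$. For $z \notin S_\pi$ one obtains
\[
  R(z) \;=\; s^\star W(\pi^{-1}(z)) \;-\; \frac{(1-s^\star)\,k_z\,(\lambda n - 1)}{2},
\]
which is at least $s^\star n c - O\bigl((1-s^\star)\,k_z\,\lambda n\bigr)$ since $W(\pi^{-1}(z)) \ge n c$ in this case. For $z \in S_\pi$, discarding $s^\star W$ and using $(n-k_z)/(\lambda n) \le 1/\lambda$, one finds $R(z) \ge \tfrac{1-s^\star}{2}\bigl(n - k_z - 1/\lambda\bigr)$. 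In both regimes the target follows once $k_z$ and $1/\lambda$ are small enough.

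Uniform control on $k_z$ comes from Lemma~\ref{lem:bitlemma} applied to $M = L = \{z : W(z) \le nc\}$. Setting $c = \tfrac{1}{2} - n^{-1/4}$, a Chernoff bound gives $|L| \le 2^n e^{-2\sqrt{n}}$, so that $n(1-\gamma)$ is of order $\sqrt{n}$. Taking $k$ to be a sufficiently large constant multiple of $\sqrt{n}$ then drives the lemma's $q$ bound to zero, and so with high probability every $z$ has $k_z = O(\sqrt{n})$. With this in hand, the choice $\lambda = 2 n^{-3/4}$ gives $\lambda n = 2 n^{1/4} > 1$, $1/\lambda = n^{3/4}/2$, and $k_z\,\lambda n = O(n^{3/4})$, so both cases of the $R(z)$ bound stay within $(1-s^\star)\,n^{3/4}/2$ of the threshold $(1-s^\star)\,n/2$, which is precisely the required slack. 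The main obstacle is matching all three parameters at once: $\tfrac{1}{2} - c$ must be small enough that $s^\star n c$ reaches $(1-s^\star)\,n/2$ up to the allowed $O(n^{3/4})$ error, yet large enough that Chernoff makes $|L|/2^n$ sub-exponentially small and Lemma~\ref{lem:bitlemma} delivers a $\sqrt{n}$-scale neighbor bound; $\lambda$ then has to land in the narrow window that keeps both case bounds above the target.
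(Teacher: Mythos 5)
Your proposal is correct and follows essentially the same route as the paper: the Collatz--Wielandt bound applied to the two-valued ansatz $|A_{\pi}\rangle$, with Lemma \ref{lem:bitlemma} at scale $k\sim\sqrt{n}$ controlling the number of one-bit-flip neighbors in $S_{\pi}$, and $c-\tfrac12$, $\lambda$ chosen on the $n^{-1/4}$ and $n^{-3/4}$ scales respectively. The only differences are cosmetic (you fix $\lambda=2n^{-3/4}$ and check both cases directly, rather than tuning $\lambda$ to balance the $z\notin S_{\pi}$ case exactly, and you use only the $q$ part of the lemma), and your stated parameters do satisfy the two-sided constraints you flag at the end.
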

\begin{proof}
Return to $S_{\pi}$ and set $c=\frac{1}{2}-\frac{1}{2}n^{-\frac{1}{4}}$.
We will show that the state $|A_{\pi}\rangle$ from\prettyref{eq:API}
(with $\lambda$ specified later) furnishes the claimed lower bound
on the ground state energy when plugged into Theorem \ref{thm:collatz-wielandt}.
In order to do this we establish some properties of the set $S_{\pi}$.
First \begin{eqnarray*}
\left|S_{\pi}\right| & = & \sum_{j=1}^{\left\lfloor cn\right\rfloor }\left(\begin{array}{c}
n\\
j\end{array}\right)\\
 & \leq & \left(\frac{1}{c^{c}\left(1-c\right)^{1-c}}\right)^{n}.\end{eqnarray*}
 This upper bound on $\left|S_{\pi}\right|$ follows directly from
the Chernoff-Hoeffding bound. Now let \begin{equation}
f(x)=-x\log_{2}(x)-\left(1-x\right)\log_{2}(1-x)\label{eq:binentropy}\end{equation}
so \[
\left(\frac{1}{c^{c}\left(1-c\right)^{1-c}}\right)^{n}=2^{nf(c)}.\]
 Now \begin{eqnarray*}
f(c) & = & f(\frac{1}{2}-\frac{1}{2}n^{-\frac{1}{4}})=1-\frac{1}{\ln2}\left[\frac{1}{1\cdot2}\left(\frac{1}{n}\right)^{\frac{1}{2}}+\frac{1}{3\cdot4}\left(\frac{1}{n}\right)+\frac{1}{5\cdot6}\left(\frac{1}{n}\right)^{\frac{3}{2}}+...\right]\\
 & \leq & 1-\frac{1}{2\ln2}\left(\frac{1}{\sqrt{n}}\right),\end{eqnarray*}
 so $|S_{\pi}|\leq2^{n\left(1-\frac{1}{2\ln2}\left(\frac{1}{\sqrt{n}}\right)\right)}.$

Lemma \ref{lem:bitlemma} says that the probability $p$ that a set
$\left\{ z,z\oplus e_{j_{1}},z\oplus e_{j_{2}},...,z\oplus e_{j_{k-1}}\right\} \subseteq S_{\pi}$
exists satisfies \begin{eqnarray*}
p & \leq n^{k} & 2^{n\left[1-\frac{k}{2\ln2}\left(\frac{1}{\sqrt{n}}\right)\right]}\\
 & = & 2^{\left[k\frac{\ln n}{\ln2}+n-\frac{k\sqrt{n}}{2\ln2}\right]}.\end{eqnarray*}
We now choose \[
k=\left\lceil \,2\sqrt{n}\,\right\rceil \]
 so that this probability goes to $0$ as $n\rightarrow\infty$. With
high probability there is no set consisting of more than $k-2$ one
bit flip neighbors of $z$ in $S_{\pi}$ when $z$ is in $S_{\pi}.$
So

\[
\text{ Pr}\left[\sum_{i=1}^{n}\frac{\langle z\oplus e_{i}|A_{\pi}\rangle}{\langle z|A_{\pi}\rangle}\leq\frac{\left[\lambda n\left(k-2\right)+1\cdot\left(n-k+2\right)\right]}{\lambda n}\text{ for all }z\in S_{\pi}\right]\rightarrow1\text{ as }n\rightarrow\infty.\]
 Similarly the probability that a set of the form $\left\{ z\oplus e_{i_{1}},z\oplus e_{i_{2}},...,z\oplus e_{i_{k}}\right\} \subseteq S_{\pi}$
exists goes to zero as $n\rightarrow\infty.$ That is, with high probability
there is no set consisting of more than $k-1$ one bit flip neighbors
of $z$ in $S_{\pi}$ for $z\notin S_{\pi}$. So \[
\text{ Pr}\left[\sum_{i=1}^{n}\frac{\langle z\oplus e_{i}|A_{\pi}\rangle}{\langle z|A_{\pi}\rangle}\leq\lambda n\left(k-1\right)+1\cdot\left(n-k+1\right)\text{ for all }z\notin S_{\pi}\right]\rightarrow1\text{ as }n\rightarrow\infty.\]
 Let us compute the lower bound on the ground state energy $E_{g}(s^{\star})$
that is obtained by using $|A_{\pi}\rangle$ in Theorem \ref{thm:collatz-wielandt}
assuming both of these events occur. We get\begin{eqnarray*}
\frac{\langle z|H_{\pi}(s^{\star})|A_{\pi}\rangle}{\langle z|A_{\pi}\rangle} & = & s^{\star}E(z)+\left(\frac{1-s^{\star}}{2}\right)n-\left(\frac{1-s^{\star}}{2}\right)\sum_{i=1}^{n}\frac{\langle z\oplus e_{i}|A_{\pi}\rangle}{\langle z|A_{\pi}\rangle}\\
 & \geq & {\displaystyle \begin{cases}
{\displaystyle \left(\frac{1-s^{\star}}{2}\right)n-\left(\frac{1-s^{\star}}{2}\right)\left[\frac{\lambda n\left(k-2\right)+1\cdot\left(n-k+2\right)}{\lambda n}\right]} & ,\text{ for }z\in S_{\pi}\\
{\displaystyle s^{\star}\cdot nc+\left(\frac{1-s^{\star}}{2}\right)n-\left(\frac{1-s^{\star}}{2}\right)\left[\lambda n\left(k-1\right)+1\cdot\left(n-k+1\right)\right]} & ,\text{ for }z\notin S_{\pi}\;.\end{cases}}\end{eqnarray*}

Now choose \[
\lambda=\frac{1}{k-1}\left(\frac{2s^{\star}c}{1-s^{\star}}-1\right)\]
 so 

\begin{eqnarray*}
\frac{\langle z|H_{\pi}(s^{\star})|A_{\pi}\rangle}{\langle z|A_{\pi}\rangle} & \geq & \begin{cases}
{\displaystyle \left(\frac{1-s^{\star}}{2}\right)n-\left(\frac{1-s^{\star}}{2}\right)\left[\frac{\lambda n\left(k-2\right)+1\cdot\left(n-k+2\right)}{\lambda n}\right]} & ,\text{ for }z\in S_{\pi}\\
{\displaystyle \left(\frac{1-s^{\star}}{2}\right)n+\left(\frac{1-s^{\star}}{2}\right)\left(k-1\right)} & ,\text{ for }z\notin S_{\pi}\end{cases}\\
 & \geq & \left(\frac{1-s^{\star}}{2}\right)\left[n-\left(k-2\right)-\frac{1}{\lambda}\right]\text{ for all }z.\end{eqnarray*}
 Recall that we picked $c=\frac{1}{2}-\frac{1}{2}n^{-\frac{1}{4}}$
and $s^{\star}=\frac{1}{2}+n^{-\frac{1}{4}}.$ This gives \[
\lambda=\frac{1}{k-1}\left(3\left(\frac{1}{n^{\frac{1}{4}}}\right)+4\left(\frac{1}{n^{\frac{1}{4}}}\right)^{2}+8\left(\frac{1}{n^{\frac{1}{4}}}\right)^{3}+...\right)\]
 which means, since $k=\left\lceil \,2\sqrt{n}\,\right\rceil ,$ that
$\lambda\geq\frac{3}{2}n^{-\frac{3}{4}}$ and \[
\frac{\langle z|H_{\pi}(s^{\star})|A_{\pi}\rangle}{\langle z|A_{\pi}\rangle}\geq\left(\frac{1-s^{\star}}{2}\right)\left[n-n^{\frac{3}{4}}\right]\text{ for all }z\]
 with probability approaching $1$ as $n\rightarrow\infty.$ Applying
Theorem \ref{thm:collatz-wielandt} completes the proof of Lemma \ref{lem:Lemma}.
\end{proof}
We are now ready to prove the lower bound on $E_{\pi}(s)$ which is
claimed in Theorem \ref{thm:mainthm}. Since $H_{\pi}(s)$ is linear
in $s$ we can write \[
H_{\pi}(s)=\left(\frac{s-s_{1}}{s_{2}-s_{1}}\right)H_{\pi}(s_{2})+\left(\frac{s_{2}-s}{s_{2}-s_{1}}\right)H_{\pi}(s_{1})\]
 and taking the expectation of both sides in the ground state of $H_{\pi}(s)$
gives, for $s_{1}<s<s_{2},$ \begin{equation}
E_{\pi}(s)\geq\left(\frac{s-s_{1}}{s_{2}-s_{1}}\right)E_{\pi}(s_{2})+\left(\frac{s_{2}-s}{s_{2}-s_{1}}\right)E_{\pi}(s_{1})\label{eq:concavity}\end{equation}
 by the variational principle. Since $E_{\pi}(0)=0$ this gives \[
E_{\pi}(s)\geq\frac{s}{s^{\star}}E_{\pi}(s^{\star}).\]
Using Lemma \eqref{lem:Lemma} we get that with probability $\rightarrow1$
as $n\rightarrow\infty$ \[
E_{\pi}(s)\geq\frac{s}{s^{\star}}\left(\frac{1-s^{\star}}{2}\right)\left[n-n^{\frac{3}{4}}\right]\:\text{ for }0\leq s\leq s^{\star}\]
 and recalling that $s^{\star}=\frac{1}{2}+n^{-\frac{1}{4}}$ gives
\begin{eqnarray*}
E_{\pi}\left({\textstyle \frac{1}{2}}\right) & \geq & \frac{n}{4}\left(\frac{\frac{1}{2}-n^{-\frac{1}{4}}}{\frac{1}{2}+n^{-\frac{1}{4}}}\right)\left(1-n^{-\frac{1}{4}}\right)\\
 & \geq & \frac{n}{4}\left(1-5n^{-\frac{1}{4}}\right).\end{eqnarray*}
Using \eqref{eq:concavity} again twice (with $s_{1}=0$, $s_{2}=\frac{1}{2}$
and with $s_{1}=\frac{1}{2}$ , $s_{2}=1$) we get \[
\frac{E_{\pi}(s)}{n}\geq e(s)\left[1-5n^{-\frac{1}{4}}\right]\]
 for $0\leq s\leq1$ with probability $\rightarrow1$ as $n\rightarrow\infty$,
which completes the proof.

\subsection{Proof of Theorem \ref{thm:gapthm}}

We will use the adiabatic theorem (in the form given in Theorem 3
of \cite{2007JMP....48j2111J} with $m=1$) as well as a theorem from
\cite{2005quant.ph.12159F}. For completeness we reproduce the statements
of these theorems here: 

\begin{adiabaticthm}\label{thm:-adthm}Let $H(s)$ be a finite-dimensional
twice differentiable Hamiltonian on $0\leq s\leq1$ with a nondegenerate
ground state $|\phi(s)\rangle$ separated by an energy gap $\gamma(s).$
Let $|\psi(t)\rangle$ be the state obtained by Schrödinger time evolution
with Hamiltonian $H\left(\frac{t}{T}\right)$ starting with state
$|\phi(0)\rangle$ at $t=0.$ Then \[
\sqrt{1-\left|\langle\psi(T)|\phi(1)\rangle\right|^{2}}\leq\frac{1}{T}\left[\frac{1}{\gamma(0)^{2}}\left\Vert \frac{dH}{ds}\right\Vert _{s=0}+\frac{1}{\gamma(1)^{2}}\left\Vert \frac{dH}{ds}\right\Vert _{s=1}+\int_{0}^{1}ds\left(\frac{7}{\gamma^{3}}\left\Vert \frac{dH}{ds}\right\Vert ^{2}+\frac{1}{\gamma^{2}}\left\Vert \frac{d^{2}H}{ds^{2}}\right\Vert \right)\right].\]

\end{adiabaticthm}

The next theorem considers the complexity of finding the minimum of
any scrambled cost function by continuous time Hamiltonian evolution.
The quantum adiabatic algorithm is only a special case of continuous
time Hamiltonian evolution. The theorem says that for a totally unstructured,
i.e scrambled, cost function no quantum algorithm can achieve more
than Grover speed up.

\begin{scramblethm}\label{thm:scramthm}Let $h(z)$ be a cost function,
with $h(0)=0$ and $h(1),h(2),...,h(N-1)$ all positive. Let $\pi$
be a permutation on $N$ elements, and $H_{D}(t)$ be an arbitrary
$\pi$-independent Hamiltonian. Consider the Hamiltonian \[
\tilde{H}_{\pi}(t)=H_{D}(t)+c(t)\left(\sum_{z=0}^{N-1}h(\pi^{-1}(z))|z\rangle\langle z|\right),\]
 where $|c(t)|\leq1$ for all $t$. Let $|\psi_{\pi}(T)\rangle$ be
the state obtained by Schrodinger evolution governed by $\tilde{H}_{\pi}(t)$
for time $T$, with a $\pi$-independent starting state. Suppose that
the success probability $|\langle\psi_{\pi}(T)|\pi(0)\rangle|^{2}\geq\frac{1}{2}$,
for a set of $\epsilon N!$ permutations. Then \begin{eqnarray*}
T & \geq & \frac{\epsilon^{2}\sqrt{N}}{64h^{\star}}\;\text{ for }N\geq\frac{256}{\epsilon}.\end{eqnarray*}
 where \[
h^{\star}=\left(\frac{\sum_{z}h(z)^{2}}{N-1}\right)^{\frac{1}{2}}.\]

\end{scramblethm}

We now return to the quantum adiabatic Hamiltonian $H_{\pi}(s)$ given
by\prettyref{eq:hpi} and apply the Adiabatic Theorem. We have \[
\gamma_{\pi}(0)=\gamma_{\pi}(1)=1\]
 and \[
\left\Vert \frac{dH}{ds}\right\Vert \leq2n.\]
 Here $N=2^{n}$. Furthermore the ground state at $s=1$ is $|\phi(1)\rangle=|\pi(0)\rangle$.
The state $|\psi_{\pi}\left(T\right)\rangle$ is obtained by evolving
with the Hamiltonian $H_{\pi}\left(\frac{t}{T}\right)$ starting from
the state $|x=0\rangle$. Plugging into the Adiabatic Theorem we get\begin{eqnarray}
\sqrt{\left(1-\left|\langle\psi_{\pi}(T)|\pi(0)\rangle\right|^{2}\right)} & \leq & \frac{1}{T}\left[4n+\int_{0}^{1}\frac{28n^{2}}{\gamma_{\pi}^{3}}ds\right]\nonumber \\
 & \leq & \frac{1}{T}\left[4n+\frac{28n^{2}}{g_{\pi}^{3}}\right]\nonumber \\
 & \leq & \frac{32n^{2}}{Tg_{\pi}^{3}}\label{eq:ineq}\end{eqnarray}
where in the last line we used the fact that $g_{\pi}\leq\gamma_{\pi}(0)=1.$ 

Fix $0<\epsilon<1$. Let $R$ be the set of permutations $\pi$ for
which \begin{equation}
g_{\pi}>\left(\frac{32n^{3}}{\left[\frac{\epsilon{}^{2}}{128}\sqrt{N}\right]}\right)^{\frac{1}{3}}.\label{eq:contradict}\end{equation}
 Then plugging into\prettyref{eq:ineq} we get \[
\sqrt{\left(1-\left|\langle\psi_{\pi}(T)|\pi(0)\rangle\right|^{2}\right)}\leq\frac{\left(\frac{\epsilon{}^{2}}{128n}\sqrt{N}\right)}{T}\text{ for all }\pi\in R.\]
 Now choose \begin{equation}
T=\frac{\sqrt{2}\epsilon{}^{2}}{128n}\sqrt{N}\label{eq:tstar}\end{equation}
 which guarantees that $\left|\langle\psi_{\pi}(T)|\pi(0)\rangle\right|^{2}\geq\frac{1}{2}$
for all $\pi\in R$. Assume (to get a contradiction) that the size
of $R$ is at least $\epsilon N!$. Now apply the Scrambled Theorem
to obtain (for $N\geq\frac{256}{\epsilon}$) \begin{eqnarray*}
T & \geq & \frac{\epsilon^{2}}{64n}\sqrt{N}\end{eqnarray*}
 where we have used the fact that in our case $h^{\star}\leq n.$
But\prettyref{eq:tstar} contradicts the last inequality. Therefore
$R$ cannot contain $\epsilon N!$ permutations when $N$ is sufficiently
large. In other words, for $N\geq\frac{256}{\epsilon}$ \[
\text{Pr}\left[g_{\pi}\leq\left(\frac{32n^{3}}{\left[\frac{\epsilon{}^{2}}{128}\sqrt{N}\right]}\right)^{\frac{1}{3}}\right]\geq1-\epsilon.\]
 For example choosing $\epsilon=\frac{64}{n\sqrt{n}}$ we get Theorem
\ref{thm:gapthm} in the stated form.

\subsection{Proof of Theorem \ref{thm:nopert_cross}}

For any Hamiltonian $H$, the first excited state energy $E_{1}$
is equal to \[
E_{1}=\max_{|\phi\rangle}\left[\min_{|\psi\rangle\text{ s.t }\langle\phi|\psi\rangle=0}\;\langle\psi|H|\psi\rangle\right]\]
where $\langle\psi|\psi\rangle=1$. Let us now apply this fact to
bound the first excited state energy $E_{1,\pi}(s)$ of $H_{\pi}(s)$.
We can get a lower bound on $E_{1,\pi}(s)$ by fixing $|\phi\rangle$
to be a particular state. Choosing $|\phi\rangle=|\pi(0)\rangle$
gives \[
E_{1,\pi}(s)\geq\min_{|\psi\rangle\text{ s.t }\langle\pi(0)|\psi\rangle=0}\;\langle\psi|H_{\pi}(s)|\psi\rangle.\]
 Now the quantity on the RHS is the ground state energy of the Hamiltonian
obtained from $H_{\pi}(s)$ by removing one row and one column corresponding
to the state $|\pi(0)\rangle.$ This reduced Hamiltonian is stoquastic
and applying Theorem \ref{thm:collatz-wielandt} to it gives a lower
bound \begin{equation}
E_{1,\pi}(s)\geq\min_{z\text{\ensuremath{\neq\pi}(0)}}\frac{\langle z|H_{\pi}(s)|\chi\rangle}{\langle z|\chi\rangle}\label{eq:firstexcitedbnd}\end{equation}
 for any state $|\chi\rangle$ such that $\langle y|\chi\rangle>0$
for all bit strings $y\neq\pi(0)$ and $\langle\pi(0)|\chi\rangle=0$.

We will use the bound\prettyref{eq:firstexcitedbnd} with a state
$|\chi_{\pi}\rangle$ defined by \[
\langle z|\chi_{\pi}\rangle=\begin{cases}
1 & ,\: z\notin\tilde{S}_{\pi}\\
\mu n & ,\: z\in\tilde{S}_{\pi}\\
0 & ,\: z=\pi(0).\end{cases}\]
 where $\mu$ depending on $s$ will be chosen later and \[
\tilde{S}_{\pi}=\left\{ z:\: z\neq\pi(0)\text{ and }W(\pi^{-1}(z))\leq nc\right\} \]
 for some $c$ which we will also choose later. $\tilde{S}_{\pi}$
is the image of the set $\left\{ z:\: z\neq0\text{ and }W(z)\leq nc\right\} $
under the permutation $\pi$. For $n$ large enough we will have $\mu n\geq1$.

To evaluate the RHS of\prettyref{eq:firstexcitedbnd} we need \begin{eqnarray*}
\frac{\langle z|H_{\pi}(s)|\chi_{\pi}\rangle}{\langle z|\chi_{\pi}\rangle} & = & sW\left(\pi^{-1}(z)\right)+\left(\frac{1-s}{2}\right)n-\left(\frac{1-s}{2}\right)\sum_{i=1}^{n}\frac{\langle z\oplus e_{i}|\chi_{\pi}\rangle}{\langle z|\chi_{\pi}\rangle}\\
 & \geq & \begin{cases}
{\displaystyle s+\left(\frac{1-s}{2}\right)n-\left(\frac{1-s}{2}\right)\sum_{i=1}^{n}\frac{\langle z\oplus e_{i}|\chi_{\pi}\rangle}{\langle z|\chi_{\pi}\rangle}} & ,\text{ for }z\in\tilde{S}_{\pi}\\
{\displaystyle s\cdot nc+\left(\frac{1-s}{2}\right)n-\left(\frac{1-s}{2}\right)\sum_{i=1}^{n}\frac{\langle z\oplus e_{i}|\chi_{\pi}\rangle}{\langle z|\chi_{\pi}\rangle}} & ,\text{ for }z\notin\tilde{S}_{\pi}\text{ and }z\neq\pi(0).\end{cases}\end{eqnarray*}
The set $\tilde{S}_{\pi}$ has cardinality \[
\left|\tilde{S}_{\pi}\right|\leq2^{f(c)n}\]
where $f(x)$ is given by\prettyref{eq:binentropy}. Applying Lemma
\ref{lem:bitlemma} with $k=2$ we get that with probability at least
$1-n^{2}\cdot2^{n\left[1-2\left(1-f(c)\right)\right]}$ there is no
pair of one bit flip neighbors \[
\left\{ z,z\oplus e_{j}\right\} \subseteq\tilde{S}_{\pi}\]
 and similarly with probability at least $1-n^{2}\cdot2^{n\left[1-2\left(1-f(c)\right)\right]}$
there is no set \[
\left\{ z\oplus e_{j_{1}},z\oplus e_{j_{2}}\right\} \subseteq\tilde{S}_{\pi}.\]
Choose $c<\frac{1}{2}$ to make $f(c)=0.49$. Then with probability
at least $1-2n^{2}2^{-0.02n}$ we have \begin{eqnarray}
\frac{\langle z|H_{\pi}(s)|\chi_{\pi}\rangle}{\langle z|\chi_{\pi}\rangle} & \geq & \begin{cases}
s+\left(\frac{1-s}{2}\right)n-\left(\frac{1-s}{2}\right)\frac{1}{\mu} & ,\text{ for }z\in\tilde{S}_{\pi}\\
s\cdot nc+\left(\frac{1-s}{2}\right)n-\left(\frac{1-s}{2}\right)\left[n\mu+n-1\right] & ,\text{ for }z\notin\tilde{S}_{\pi}\text{ , }z\neq\pi(0)\end{cases}\nonumber \\
 & = & \begin{cases}
\left(\frac{1-s}{2}\right)n-\left(\frac{1-s}{2}\right)\left[\frac{1}{\mu}-\frac{2s}{1-s}\right] & ,\text{ for }z\in\tilde{S}_{\pi}\\
\left(\frac{1-s}{2}\right)n-\left(\frac{1-s}{2}\right)\left[n\left(\mu+1\right)-1-\frac{2snc}{1-s}\right] & ,\text{ for }z\notin\tilde{S}_{\pi}\text{ , }z\neq\pi(0).\end{cases}\label{eq:b3}\end{eqnarray}
 For $s>\frac{1}{1+2c}\approx0.82$ choose $\mu$ to be \[
\mu=\frac{2sc}{1-s}-1.\]
 Then \begin{eqnarray*}
\left(\frac{1-s}{2}\right)\left[n\left(\mu+1\right)-1-\frac{2snc}{1-s}\right] & = & -\left(\frac{1-s}{2}\right)\end{eqnarray*}
 and \begin{eqnarray}
\left(\frac{1-s}{2}\right)\left[\frac{1}{\mu}-\frac{2s}{1-s}\right] & = & \frac{1-s^{2}\left(1+4c\right)}{2s\left(1+2c\right)-2}.\label{eq:b2}\end{eqnarray}
 Plugging this into\prettyref{eq:b3} gives that with probability
at least $1-n^{2}2^{-0.02n}$ , for $0.9\leq s\leq1$, \[
E_{1,\pi}(s)\geq\min_{z\neq\pi(0)}\frac{\langle z|H_{\pi}(s)|\chi_{\pi}\rangle}{\langle z|\chi_{\pi}\rangle}\geq\left(\frac{1-s}{2}\right)n+\frac{s^{2}\left(1+4c\right)-1}{2s\left(1+2c\right)-2}.\]
 Then since the ground state energy of $H_{\pi}(s)$ is less than
or equal to $\left(\frac{1-s}{2}\right)n$, the gap is bounded below
by the second term on the RHS, with probability at least $1-n^{2}2^{-0.02n}.$
For $0.9\leq s\leq1$ this term is larger than $0.8$. This completes
the proof.

\section*{Acknowledgements}

We thank Elihu Abraham, Boris Altshuler, Chris Laumann, and Vadim
Smelyanskiy for interesting discussions. This work was supported in
part by funds provided by the W. M. Keck Foundation Center for Extreme
Quantum Information Theory, the U.S Army Research Laboratory's Army
Research Office through grant number W911NF-09-1-0438, the National
Science Foundation through grant number CCF-0829421, and the Natural
Sciences and Engineering Research Council of Canada.

\bibliographystyle{plain}
\nocite{*}
\bibliography{straightlines}

\end{document}